\documentclass[pra,twocolumn,showkeywords,longbibliography,superscriptaddress,10pt]{revtex4-1}
\usepackage{graphicx}
\usepackage{dcolumn}
\usepackage{color}
\usepackage{times}
\usepackage{bm}
\usepackage{amssymb}
\usepackage{amsmath}
\usepackage{epsfig}
\usepackage{epstopdf}
\usepackage{dsfont}
\usepackage{subfigure}
\usepackage{tikz}
\usepackage[colorlinks, citecolor=blue, linkcolor=blue,urlcolor=blue]{hyperref}
\usepackage[mathscr]{euscript}
\usepackage{amsthm}
\usepackage{xcolor}

\newtheoremstyle{myprop}
  {\topsep}              
  {\topsep}              
  {\upshape}             
  {}                     
  {\scshape\itshape}     
  {.}                    
  { }                    
  {\thmname{#1}\thmnumber{ #2}} 
\theoremstyle{myprop}
\newtheorem{proposition}{Proposition} 

\newcommand{\Tr}{\operatorname{Tr}}
\newcommand{\norm}[1]{\left\| #1 \right\|}
\newcommand{\Fnorm}[1]{\norm{#1}_F}

\makeatletter
\renewcommand{\maketag@@@}[1]{\hbox{\m@th\normalsize\normalfont#1}}
\makeatother

\begin{document}
\renewcommand{\thefootnote}{\fnsymbol {footnote}}

\title{\textcolor{black}{Imprecise quantum steering inequalities in tripartite systems}}

\author{Yan Zhao}
\affiliation{School of Physics, Anhui University, Hefei
    230601,  People's Republic of China}

\author{Li-Juan Li}
\affiliation{School of Physics, Anhui University, Hefei 230601, People's Republic of China}

\author{Zheng-Peng Xu}
\affiliation{School of Physics, Anhui University, Hefei 230601,  People's Republic of China}

\author{Liu Ye}
\affiliation{School of Physics, Anhui University, Hefei 230601,  People's Republic of China}

\author{Dong Wang} \email{dwang@ahu.edu.cn}
\affiliation{School of Physics, Anhui University, Hefei
    230601,  People's Republic of China}

\date{\today}

\begin{abstract} Quantum steering, as a manifestation of nonlocal quantum correlations, plays a crucial role in enabling various quantum information processing tasks. However, practical implementations are often hindered by significant challenges arising from imperfect or untrusted measurement devices. This study investigates the impact of measurement inaccuracies on quantum steering, with a particular focus on errors in the untrusted party’s measurement devices. We first analyze how such errors affect the evaluation of steering inequalities, and then derive bipartite steering inequalities based on correlation matrices under imperfect measurements.
    Our findings show that even small measurement errors can significantly compromise the certification of quantum steerability, an effect that becomes particularly pronounced as the system dimension increases. Furthermore, by extending the proposed steering inequality to a modified tripartite scenario via correlation matrices, we demonstrate that the influence of measurement imperfections is far more severe in multipartite quantum steering than in the bipartite case. Our results underscore the critical need to account for measurement imperfections in experimental quantum steering and provide a theoretical framework for characterizing and mitigating these effects in high-dimensional quantum systems.
\end{abstract}

\maketitle

\section{Introduction}
{Quantum steering stands out as an asymmetric form of nonlocality, where one party (typically referred to as Alice) can remotely influence the spatially separated quantum state of another party (Bob) through her choice of measurements. This concept was first introduced by Schr\"{o}dinger in 1935 \cite{Schrodinger_1935} in response to Einstein, Podolsky, and Rosen's challenge to the completeness of quantum mechanics--the celebrated EPR paradox \cite{PhysRev.47.777}, and was rigorously formalized in 2007 in terms of local hidden state (LHS) models through the pioneering work of Wiseman and his colleagues \cite{PhysRevLett.98.140402,PhysRevA.76.052116}. It has become a cornerstone in the study of quantum correlations and their applications in quantum technologies. }

As a kind of quantum nonlocality, EPR steering is different from quantum entanglement and Bell nonlocality \cite{GUHNE20091,RevModPhys.86.419}. EPR steering is proven to be inherently asymmetric with respect to the two parties of the system: For certain states, one party may steer the other, but not vice versa \cite{PhysRevLett.112.200402,PhysRevA.93.022121,Zhong:17,PhysRevA.81.022101,PhysRevA.88.051802,PRXQuantum.3.030102}. Unlike the standard Bell scenario, Alice's devices are fully characterized (the ``trusted" side), while Bob's devices remain uncharacterized (a ``black box"). This  quantum steering is experimentally more demanding than entanglement witnessing yet notably less stringent than Bell nonlocality \cite{Saunders2010,PhysRevLett.116.160404,Armstrong2015,Handchen2012}. While violations of traditional Bell inequalities vanish in many-body large-scale systems, the steering inequality proposed can maintain unbounded violations, demonstrating that quantum steering surpasses Bell nonlocality in both experimental feasibility and noise robustness \cite{PhysRevLett.118.020402}. The trust in one party's devices grants steering superior robustness to noise and detector inefficiencies properties that scale favorably with increasing Hilbert space dimensions. Consequently, high-dimensional quantum steering experiments have attracted substantial interest in recent years, with experimental capabilities now extending to systems of dozens of dimensions \cite{PhysRevLett.120.030401,doi:10.1126/science.aar7053,PhysRevX.12.041023,PhysRevLett.126.200404,Qu:22,PhysRevLett.127.020401,PhysRevLett.128.240402}. This growing focus is due to the recognition of quantum steering as a vital and effective quantum resource. It plays a pivotal role in various quantum information processing tasks, including one-sided device-independent quantum key distribution \cite{PhysRevA.85.010301,Gehring2015,Walk:16}, secure quantum teleportation \cite{PhysRevLett.115.180502,PhysRevA.88.062338}, quantum randomness certification \cite{PhysRevLett.120.260401,Passaro_2015}, and subchannel discrimination \cite{PhysRevLett.114.060404,Sun2018}. These applications highlight the practical importance of quantum steering in enhancing the security and efficiency of quantum communication and computation.

A central challenge in the study of EPR steering lies in determining the steerability of quantum states. Therefore, researchers have established diverse criteria  for identifying steerable states, spanning theoretical frameworks, experimental protocols, and operational characterizations. These criterias based on linear steering inequalities \cite{PhysRevA.80.032112,PhysRevA.95.012142,Saunders2010}, the local uncertainty relation \cite{PhysRevLett.106.130402,PhysRevA.87.062103,PhysRevA.92.062130,PhysRevA.93.012108,PhysRevA.98.050104,PhysRevA.98.062111}, the all-versus-nothing proof \cite{Wu2014}, and Clauser-Horne-Shimony-Holt-like inequalities \cite{PhysRevA.94.032317,PhysRevA.95.062111,Cavalcanti:15}. However, quantum steering scenarios also rely on the assumption that one party must perform perfectly characterized measurements. This requirement introduces practical challenges, as real-world experiments often involve measurement devices that are not entirely precise or trustworthy,  inherently suffering from noise, finite precision, and calibration errors. For example, photon detectors exhibit dark counts (false signals from thermal noise) \cite{10.1117/12.2017357} and inefficiencies (missed photon detections); optical components (e.g., waveplates, beam splitters) have alignment inaccuracies, leading to deviations from idealized projective measurements; temperature fluctuations or electromagnetic interference introduce stochastic errors.
Thus, if errors exist in the trusted measurement devices, non-steerable states may be falsely identified as steerable. For instance, deviations in Bob's measurement basis from the theoretical design could lead to the observation of spurious steerability, this poses a catastrophic negative impact on the applications of quantum steering. Moreover, this lack of trust exhibits dimension-dependent sensitivity: for a $d$-dimensional system, the error impact scales as $\mathcal{O}(d^3)$ due to the growing Hilbert space \cite{PhysRevA.108.L040401}. For example, a 1\% error in a 10-dimensional system may require more than $99\%$ trust to avoid misclassifications \cite{PhysRevA.108.L040401}. This dimension-dependent sensitivity necessitates both near-perfect device characterization and the use minimal measurements \cite{sarkar2024deviceindependentcertificationgmestates} to achieve reliable steering verification in scalable quantum architectures.

In this paper, we investigate the impact of measurement inaccuracies in Bob's side on quantum steering scenarios. We firstly examine how errors affect measurement outcomes. We model errors by quantifying deviations between target and implemented measurement operators via the square of Frobenius norm, showing their impact grows with system dimension. Then, we derive steering inequalities for correlation matrices under inaccurate measurements, finding that even small errors can significantly influence the results of steering inequalities, and this effect becomes more pronounced as the dimension increases. Finally, we extend the current analysis to tripartite correlation matrices, discovering that in tripartite steering, the impact of these errors is even more substantial than in bipartite steering.

This paper is organized as follows. Section \ref{Sec2}   outlines foundational concepts of quantum steering,  and Section \ref{Sec3} reviews established criteria for detecting steerable states. In Section \ref{Sec4}, we propose a generalized steering criterion tailored for high-dimensional bipartite systems and provide a rigorous proof of its validity. Additionally, we analyze the interplay between measurement imperfections and steering robustness, deriving error bounds for practical implementations. Finally, we conclude this paper with a concise discussion and conclusion.

\section{Einstein-Podolsky-Rosen steering}\label{Sec2}
{The EPR steering task in quantum nonlocality verification is typically described through the following protocol framework: consider such an entangled system $\hat{\rho}_{AB}$, which consists of two spatially separated parties, Alice and Bob.
Alice initially prepares a bipartite entangled state $\hat{\rho}_{AB}$ of local dimension $d$, with reduced density matrices satisfying $ {\hat{\rho}_a}= \mathrm{{Tr}_b}(\hat{\rho}_{AB})$ and ${\hat{\rho}_b} = \mathrm{{Tr}_a}(\hat{\rho}_{AB})$, then transmits the subsystem to Bob. Alice's core objective resides in enabling Bob who exclusively trusts his own measurement apparatus to become convinced of the existence of quantum entanglement between their systems. To achieve this, Alice must induce observable steering effects on Bob's quantum state through measurement operations.
Specifically, Alice and Bob independently carry out a series of measurement with setting $x$ and $y$ on their respective subsystems and obtained the corresponding outcomes $a, b\in\{1, ...,d\}$. Alice's measurement operators \(\{\hat{A}_{a|x}\}\) remain \textit{uncharacterized}, while Bob's measurement operators \(\{\hat{B}_{b|y}\}\) are fully specified as known quantum projective measurements. Then, the correlations can be derived by applying the Born rule
\textcolor{black}{\begin{align}
        p(a,b|x,y) = \mathrm{Tr}\left[\left(\hat{A}_{a|x} \otimes \hat{B}_{b|y}\right)\hat{\rho}_{AB}\right].
        \label{Eq.1}
    \end{align}}
A quantum system exhibits steering phenomena if its correlations cannot be described by  \textit{local hidden state} (LHS) model. Specifically, it can be mathematically expressed as:

\begin{equation}
    p(a,b|x,y) =\sum_{\lambda} \, p(\lambda) \, p(a|x,\lambda) \, \mathrm{Tr}\left( \hat{B}_{b|y} \, \hat{\sigma}_{\lambda} \right),
    \label{Eq.2}
\end{equation}
where $\lambda$ denotes the predetermined hidden-variable parameter, $p(a|x, \hat{\lambda})$ represents the local response function, and $\sigma_\lambda$ is the hidden state. Bob's unnormalized conditional state $\hat{\sigma}_{a|x}=\mathrm{Tr}_a\left[ \left( \hat{A}_{a|x} \otimes \mathbb{I} \right){\hat{\rho}_{AB}} \right]$ corresponding to measurement setting \( x \) and outcome \( a \) chosen by Alice satisfies $\sum_{a} \sigma_{a|x} = \rho_B$ ensuring Alice's choice of measurement does not affect Bob's reduced state. Bob subsequently verifies whether his assemblage of conditional states $\{\sigma_{a|x}\}$can be described by a LHS model
\begin{equation} \label{Eq.3}
    \hat{\sigma}_{a|x} = \sum_{\lambda} p(\lambda) \, p(a|x,\lambda) \, \hat{\sigma}_\lambda.
\end{equation}
The shared bipartite state $\hat{\rho}_{AB}$ is steerable if the assemblages don't admit such an LHS model.

{

In Ref. \cite{PhysRevA.106.042402}, the authors derive an efficient linear criterion for arbitrary bipartite system, which is based on correlation matrices, the following definitions and conclusions are quoted from their work. Consider a bipartite quantum state \(\rho\) shared between Alice and Bob on a Hilbert space \(\mathcal{H} = \mathcal{H}_a \otimes \mathcal{H}_b\), while $\hat{\rho}_a$ and $\hat{\rho}_b$ are reduced states for Alice and Bob respectively. Moreover, two sets of local observables of Alice and Bob are labeled by $\mathcal{A} = \{ \hat{A}_i \mid i = 1, 2, \ldots, m \}$ and $\mathcal{B} = \{\hat{B}_j \mid j = 1, 2, \ldots, m \}$. Then, this allows us to construct the correlation matrix associated with these measurements as
\begin{equation} \label{eq:correlation_matrix}
    C(\mathcal{A}, \mathcal{B} \mid \rho) = \bigl( c_{ij} \bigr),
\end{equation}
with  each entry being $
    c_{ij} = \operatorname{Tr}\left[ (\hat{A}_i \otimes \hat{B}_j)(\hat{\rho} - \hat{\rho}_a \otimes \hat{\rho}_b) \right].
$

It has been proven that, if the quantum state $\rho$ is unsteerable from Alice to Bob,  then
\begin{equation} \label{eq:steer inequality}
    \|C(\mathcal{\hat{A}}, \mathcal{\hat{B}} \mid \hat{\rho})\|_{\operatorname{Tr}} \leqslant \sqrt{\Lambda_a \Lambda_b},
\end{equation}
where the parameters are defined as:
\begin{equation}
    \Lambda_a = \sum_{i=1}^{m} V(\hat{A}_i, \hat{\rho}_a), \label{eq:Lambda_a}
\end{equation}
\begin{equation}
    \Lambda_b = \max_{\sigma_b} \left( \sum_{j=1}^{n} \left( \operatorname{Tr}(\hat{B}_j \hat{\sigma}_b) \right)^2 \right) - \sum_{j=1}^{n} \left( \operatorname{Tr}(\hat{B}_j \hat{\rho}_b) \right)^2, \label{eq:Lambda_b}
\end{equation}
here, $\|C(\mathcal{\hat{A}}, \mathcal{\hat{B}} \mid \hat{\rho})\|_{\operatorname{Tr}}$ denotes the trace norm of the matrix $C(\mathcal{\hat{A}}, \mathcal{\hat{B}} \mid \hat{\rho})$, which is defined as the sum of its singular values, and ${V}(\hat{A}_i, \hat{\rho}_a)=\operatorname{Tr}(\hat{A}_i^2 \hat{\rho}_a) - \left( \operatorname{Tr}(\hat{A}_i \hat{\rho}_a) \right)^2$ represents the conventional variance  of $A_i$ in the state $\hat{\rho}_a$ and the maximum is over all states $\hat{\sigma}_b$ on Bob's side.

In particular, for the special case where the local observables $\{\hat{A}_i\}$ and $\{\hat{B}_j\}$ are chosen to form local orthonormal operator (LOO) bases, the parameters $\Lambda_a$ and $\Lambda_b$ in (5) simplify to the following dimension dependent expressions \cite{PhysRevA.106.042402}:
\[
    \Lambda_a = d_1 - \operatorname{Tr}(\hat{\rho}_a^2), \qquad
    \Lambda_b = 1 - \operatorname{Tr}(\hat{\rho}_b^2),
\]
where $d_1$ is the dimension of Alice’s subsystem.

Furthermore, it also establishes a steering criterion based on the trace norm of the correlation matrix. Let \( m = n \), suppose the state \( \hat\rho \) is unsteerable from Alice to Bob, then the following inequality holds for any real numbers \( g_i \)
\begin{equation} \label{eq:trace norm inequality}
    \sum_{i=1}^m \left| \operatorname{Tr}\left( (g_i \hat{A}_i \otimes \hat{B}_i)(\hat{\rho} - \hat{\rho}_a \otimes \hat{\rho}_b) \right) \right| \leq \sqrt{\Lambda_a({g}) \Lambda_b},
\end{equation}
where \( \Lambda_b \) has been defined by Eq.~(\ref{eq:Lambda_b}), and the  conventional variance is given as fallows
\begin{equation} \label{eq:Lambda_a_g}
    \Lambda_a(\bm{g}) = \sum_{i=1}^m g_i^2 V(\hat{A}_i, \hat{\rho}_a).
\end{equation}
The freedom in choosing the parameters \(g_i\) in inequality  affords the flexibility to optimize the degree of its violation. Although these parameters are not intrinsically required, they serve a practical purpose by allowing the detection power of the steering inequality to be enhanced through their adjustment alone. As an example, given specific observables for party Alice and fixed observables for Bob, optimal measurement settings for Alice can be easily derived by simply selecting \(g_i = 0\) or \(1\).

\section{Imprecise measurements}\label{Sec3}
In the aforementioned steering criteria, we assume that Bob's measurement outcomes are trusted, which is a standard requirement in steering protocols. Now let us address scenarios where Bob's measurement devices are not fully trusted. This extension is critical for semi-device-independent steering frameworks, where partial trust in Bob's apparatus relaxes experimental constraints while maintaining security guarantees.

To quantify the errors occurring in each of Bob's measurements, consider a local quantum state in a $d$ dimensional space and a set of tomographically complete observables \(\{\hat{\sigma}_i\}\) (with \(i=1,\ldots,d^2\) and \(\sigma_1 = \mathbb{I}\)) that his device is expected to implement. Further, let \(\{\hat{\tau}_i\}\) denote the observables actually performed by the device (assumed unitary for simplicity).
These observables form a LOO basis, satisfying $\operatorname{Tr}(\hat{\sigma}_i \hat{\sigma}_j^\dagger) = 0$ for all $i \neq j$, and $\operatorname{Tr}(\hat{\rho} \hat{\sigma}_i) \leq 1$ for for all $i$. The error quantification \(\Gamma_B\) of Bob's measurement apparatus can be characterized based on the deviation between the target operators \(\{\hat{\sigma}_i\}\) and the implemented operators \(\{\hat{\tau}_i\}\), which had been studied in Refs. \cite{PhysRevA.108.L040401, PhysRevLett.132.070204, PhysRevLett.133.150201}. Fidelity  places more emphasis on the overall similarity of states \cite{PhysRevLett.132.070204}, whereas the squared Frobenius norm directly captures the implementation deviation of the operators themselves, making it more convenient for integration into operator-expansion-based theoretical derivations. Following the approach in \cite{PhysRevA.108.L040401}, we quantify the error using the squared Frobenius norm between the intended and real measurement operators. This parameter reflects the discrepancy between experimental operations and theoretical expectations
\begin{equation} \label{eq:Gamma_B}
    \Gamma_B = \left\| \hat{\sigma}_i - \hat{\tau}_i \right\|,
\end{equation}
where $||A||=\mathrm{Tr}(AA^\dagger)$. This framework explicitly bridges theoretical assumptions and experimental deviations, establishing a mathematical foundation for error analysis in semi-device-independent protocols.

We now rigorously analyze the effects of experimentally realistic imperfections on the operational validity of steering inequalities in quantum protocols.
Since \(\{\hat{\sigma}_i\}\) forms a complete basis for matrices acting on \(\mathbb{C}^d\), we can express any density matrix \(\rho_b\) and its inferred version \(\rho_b^{\mathrm{inf}}\) as
\[
    \rho_b = \sum_i r_i \hat{\sigma}_i, \quad \rho_b^{\mathrm{inf}} = \sum_i q_i \hat{\sigma}_i,
\]
where the coefficients \(r_i, q_i \in \mathbb{R}\) (due to the Hermiticity of \(\rho_a\)) satisfy \(|r_i|, |q_i| \leq 1\), and $\rho_{b}^{\mathrm{inf}}$ is the inferred state from Bob's imprecise tomography.

Our goal is to bound \(|q_i - r_i|\) for all \(i\). When Bob implements measurements \(\{\hat{\tau}_i\}\) instead of \(\{\hat{\sigma}_i\}\), the coefficients \(q_i\) and \(r_i\) are determined by
\begin{equation} \label{eq:q_definition}
    q_i = \frac{1}{d} \operatorname{Tr}\left( \hat{\rho}_b \hat{\tau}_i^\dagger \right),
    r_i = \frac{1}{d} \operatorname{Tr}\left( \hat{\rho}_b \hat{\sigma}_i^\dagger \right).
\end{equation}
Substituting the expansion \(\rho_b = \sum_j r_j \hat{\sigma}_j\) into \eqref{eq:q_definition} yields
\begin{equation} \label{eq:q_expansion}
    q_i = \frac{1}{d} \sum_j r_j \operatorname{Tr}\left( \hat{\sigma}_j \hat{\tau}_i^\dagger \right).
\end{equation}
In Ref. \cite{PhysRevA.108.L040401}, the distance between these traces while $ \left\|{\hat\tau}_i - \hat{\sigma}_i \right\|\leq\xi$ is given by
\begin{align} \label{eq:error_bound}
    |{r}_i - {q}_i| \leq d \left( \frac{\xi}{2} + \sqrt{2d\xi} \right).
\end{align}

\section{steering  inequality with Imprecise measurements}\label{Sec4}
{We now examine how experimental imperfections affect the robustness of steering inequalities in quantum protocols. Specifically, we quantify the deviation of the steering inequality mentioned before under small but finite perturbations of the target measurement operators, thereby characterizing the stability of quantum steering detection in nonideal implementations.}

\begin{proposition}
    Consider a bipartite state $\hat{\rho}_{AB}$ where Alice performs uncharacterized measurements ${\hat{A}_i}$, while Bob attempts to perform a trusted set of projective measurements ${\hat{B}_i}$ but instead implements imperfect observables ${\tilde{\hat{B}}_i}$. Assume that the deviation between each target and implemented observable is bounded  as $||{\tilde{\hat{B}}_i - \hat{B}_i}|| \leq \xi$, and let $\hat{\rho}_{\text{diff}}=(\hat{\rho} - \hat{\rho}_a \otimes \hat{\rho}_b)$. Then the following inequality holds:
    \begin{equation}\label{proof10}
        \sum_{i=1}^{m}|\mathrm{Tr}[(\hat{A}_{i}\otimes \hat{B}_{i})\hat{\rho}_{\text{diff}}]|-\sqrt{\xi} \sum_{i=1}^{m} \sqrt{\Tr(\hat{A}_i\hat{A}_i^\dagger)}\leq \sqrt{\Lambda_a \tilde{\Lambda}_b},
    \end{equation}
    where
    $\tilde{\Lambda}_{b}\leq\Lambda_{b}+2\sum_{i=1}^{m}\eta[1+\mathrm{Tr}(\hat{B}_{i}\hat{\rho}_{b})]$ holds, with $\eta=d^2\left( \frac{\xi}{2} + \sqrt{2d\xi} \right)$, and $d$ is Bob's local dimension.
\end{proposition}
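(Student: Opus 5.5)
The plan is to apply the trace-norm criterion \eqref{eq:trace norm inequality} (with $g_i=1$) to the observables Bob actually implements. Then we transfer the resulting bound back to the target observables $\hat B_i$ by controlling two things: the perturbation of the correlation terms, and the perturbation of $\Lambda_b$.

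Concretely, assume the state is unsteerable from Alice to Bob. Since the actual data are generated by $\{\hat A_i\}$ and $\{\tilde{\hat B}_i\}$, the criterion \eqref{eq:trace norm inequality} of Ref.~\cite{PhysRevA.106.042402} gives
\[
\sum_{i=1}^m \bigl|\Tr[(\hat A_i\otimes\tilde{\hat B}_i)\hat\rho_{\text{diff}}]\bigr|\le\sqrt{\Lambda_a\tilde\Lambda_b}.
\]
Here $\tilde\Lambda_b$ is defined as in \eqref{eq:Lambda_b}, with $\hat B_j$ replaced by $\tilde{\hat B}_j$. We then write $\hat B_i=\tilde{\hat B}_i+(\hat B_i-\tilde{\hat B}_i)$ and use the triangle inequality:
\[
\bigl|\Tr[(\hat A_i\otimes\hat B_i)\hat\rho_{\text{diff}}]\bigr|\le\bigl|\Tr[(\hat A_i\otimes\tilde{\hat B}_i)\hat\rho_{\text{diff}}]\bigr|+\bigl|\Tr[(\hat A_i\otimes(\hat B_i-\tilde{\hat B}_i))\hat\rho_{\text{diff}}]\bigr|.
\]
The error term is bounded by Cauchy--Schwarz in the Hilbert--Schmidt inner product:
\[
\bigl|\Tr[(\hat A_i\otimes\Delta_i)\hat\rho_{\text{diff}}]\bigr|\le\norm{\hat A_i\otimes\Delta_i}_F\,\norm{\hat\rho_{\text{diff}}}_F=\sqrt{\Tr(\hat A_i\hat A_i^\dagger)}\,\sqrt{\Tr(\Delta_i\Delta_i^\dagger)}\,\norm{\hat\rho_{\text{diff}}}_F,
\]
where $\Delta_i=\hat B_i-\tilde{\hat B}_i$. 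By hypothesis, $\Tr(\Delta_i\Delta_i^\dagger)\le\xi$. The remaining factor $\norm{\hat\rho_{\text{diff}}}_F$ must be at most $1$ to reproduce the stated constant. This requires a short argument: purity gives $\Tr\hat\rho^2\le1$, and $\Tr[\hat\rho(\hat\rho_a\otimes\hat\rho_b)]\ge0$, so $\Tr\hat\rho_{\text{diff}}^2\le\Tr\hat\rho^2+\Tr\hat\rho_a^2\Tr\hat\rho_b^2\le2$. This yields only $\sqrt2$, so I would instead absorb the constant, or use the sharper identity $\Tr\hat\rho_{\text{diff}}^2=\Tr\hat\rho^2-2\Tr[\hat\rho(\hat\rho_a\otimes\hat\rho_b)]+\Tr\hat\rho_a^2\Tr\hat\rho_b^2$ together with $\Tr[\hat\rho(\hat\rho_a\otimes\hat\rho_b)]\ge\Tr\hat\rho_a^2\Tr\hat\rho_b^2$ where available. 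Summing over $i$ and rearranging then gives \eqref{proof10}.

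It remains to bound $\tilde\Lambda_b$. We compare the two terms of \eqref{eq:Lambda_b} separately. For any state $\hat\sigma_b$, write $\tilde t_j=\Tr(\tilde{\hat B}_j\hat\sigma_b)$ and $t_j=\Tr(\hat B_j\hat\sigma_b)$. Expanding in the LOO basis as in Sec.~\ref{Sec3} and invoking \eqref{eq:error_bound}, with the extra factor $d$ coming from the normalisation in \eqref{eq:q_definition}, gives $|\tilde t_j-t_j|\le\eta$. Then
\[
\tilde t_j^2-t_j^2=(\tilde t_j-t_j)(\tilde t_j+t_j)\le\eta(2|t_j|+\eta).
\]
Taking the maximum over $\hat\sigma_b$ bounds the first term of $\tilde\Lambda_b$ by $\max\sum_j t_j^2+\sum_j\eta(2+\eta)$, using $|t_j|\le1$. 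For the subtracted term at $\hat\rho_b$, we have $-\tilde r_j^2\le-r_j^2+2\eta|r_j|+\dots$ with $r_j=\Tr(\hat B_j\hat\rho_b)$. Collecting terms and dropping the second-order pieces in $\eta$ (or absorbing them, since $\eta\ge0$ and the coefficients can be rearranged) gives
\[
\tilde\Lambda_b\le\Lambda_b+2\sum_i\eta\,[1+\Tr(\hat B_i\hat\rho_b)].
\]

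The main obstacle is this last bookkeeping. Matching the precise form $1+\Tr(\hat B_i\hat\rho_b)$, rather than $1+|\Tr(\hat B_i\hat\rho_b)|+\eta$, requires care with signs and with the $\eta^2$ terms. I would first try to handle it by bounding $|\tilde t_j+t_j|\le 2$ in the maximisation term and $|\tilde r_j+r_j|\le 2|r_j|+\eta$ in the subtracted term. The aim is to show that the $\eta^2$ contributions cancel between the two terms.
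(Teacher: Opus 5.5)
\textbf{Two steps are left open, so neither constant in the statement is reached.}

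Your overall route matches the paper's proof. You apply the criterion with $g_i=1$ to the implemented $\tilde{\hat B}_i$. You move the left-hand side back to $\hat B_i$ by the triangle inequality and Hilbert--Schmidt Cauchy--Schwarz. You then perturb $\tilde\Lambda_b$ using $|\Tr(\hat\sigma_b\tilde{\hat B}_j)-\Tr(\hat\sigma_b\hat B_j)|\le\eta$.

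\emph{The correlation term.} You need $\Fnorm{\hat\rho_{\mathrm{diff}}}\le1$. Your crude estimate gives only $\sqrt2$, i.e.\ a correction $\sqrt{2\xi}\sum_i\sqrt{\Tr(\hat A_i\hat A_i^\dagger)}$. ``Absorbing the constant'' therefore proves a weaker inequality than the one stated. Your sharper route relies on $\Tr[\hat\rho(\hat\rho_a\otimes\hat\rho_b)]\ge\Tr\hat\rho_a^2\,\Tr\hat\rho_b^2$. This holds for pure states, where it reads $\sum_k\lambda_k^3\ge(\sum_k\lambda_k^2)^2$ in the Schmidt coefficients. It fails for mixed states. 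For $\hat\rho=\tfrac13(|01\rangle\langle01|+|10\rangle\langle10|+|11\rangle\langle11|)$ one has $\hat\rho_a=\hat\rho_b=\mathrm{diag}(1/3,2/3)$, so the left side is $24/81$ and the right side is $25/81$. In that example $\Tr\hat\rho_{\mathrm{diff}}^2=4/81$, so the bound itself survives, but only because $\Tr\hat\rho^2=1/3$ is small. Any proof must trade the cross term against $\Tr\hat\rho^2$, not against the marginal purities alone. The paper does not derive this step either; it simply quotes $\Fnorm{\hat\rho_{\mathrm{diff}}}\le1$ as a standard fact. To get the stated $\sqrt{\xi}$ you need an actual proof or a reference, not a substitute.

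\emph{The bound on $\tilde\Lambda_b$.} You cannot drop positive $\eta^2$ terms from an upper bound, and the estimates you plan do not make them cancel. The bound $|\tilde r_j+r_j|\le2|r_j|+\eta$ gives $-\tilde r_j^2\le-r_j^2+2\eta|r_j|+\eta^2$, so a residual $m\eta^2$ survives. Use instead the one-sided bound $\tilde r_j^2\ge r_j^2-2\eta|r_j|$, which always holds:
\begin{itemize}
\item if $|r_j|\ge\eta$, it follows from $|\tilde r_j|\ge|r_j|-\eta\ge0$;
\item if $|r_j|<\eta$, the right side is nonpositive.
\end{itemize}
Combine this with $\tilde t_j^2-t_j^2\le2\eta$, which is your $|\tilde t_j+t_j|\le2$ and requires the implemented observables to have operator norm at most $1$. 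You then get $\tilde\Lambda_b\le\Lambda_b+2\eta\sum_i[1+|\Tr(\hat B_i\hat\rho_b)|]$ with no second-order terms.

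The paper instead pairs $(|t_j|+\eta)^2-(r_j-\eta)^2$, in which $\eta^2$ cancels identically. That step tacitly requires $\Tr(\hat B_j\hat\rho_b)\ge\eta$.

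In either version, the statement uses the signed $\Tr(\hat B_i\hat\rho_b)$ rather than its absolute value. That requires assuming $\Tr(\hat B_i\hat\rho_b)\ge0$. State this as a hypothesis rather than hoping the coefficients can be ``rearranged.''
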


\begin{proof}
    Due to imperfections in Bob's measurement implementations, he performs $\{\tilde{\hat{B}}_i\}$ instead of $\{\hat{B}_i\}$, the corresponding steering inequality based on the trace of correlation matrix takes the form
    \begin{equation} \label{eq:erorr steering_inequality}
        \sum_{i=1}^m \left| \operatorname{Tr}\left( (\hat{A}_i \otimes \tilde{\hat{B}}_i)\hat{\rho}_{\text{diff}} \right) \right| \leq \sqrt{\Lambda_a \tilde{\Lambda}_b},
    \end{equation}
    where,
    \begin{equation} \label{eq:Lambda_a}
        \Lambda_a = \sum_{i=1}^m V(\hat{A}_i, \hat{\rho}_a),
    \end{equation}
    \begin{equation}\label{eq:Lambda_a erorr}
        \tilde{\Lambda}_b = \max_{\sigma_b} \left(\sum_{j=1}^m (\mathrm{Tr}  (\tilde{\hat{B}}_j \hat{\sigma}_b))^2 - \sum_{j=1}^m (\mathrm{Tr} (\tilde{\hat{B}}_j \hat{\rho}_b))^2 \right).
    \end{equation}

    {For simplicity}, we set ${g_i}=1$  in this context. Let us first focus on the right-hand side of the inequality (\ref{eq:erorr steering_inequality}). Since Alice's measurements are uncharacterized, her conventional variance remains constant, which implies we need only consider $\tilde{\Lambda}_b$. By applying the triangle inequality, we can straightforwardly derive the following relationship
    \begin{equation}\label{inequality a}
        |\mathrm{Tr}( \hat{\sigma} _ {b} \tilde {\hat{B} } _ {j} ) | \leq |\mathrm{Tr}( \hat{\sigma}_{b} \hat{B}_{j} ) | + |\mathrm{Tr}( \hat{\sigma} _ {b} \hat{B} _ { j } ) -\mathrm{Tr}( \hat{\sigma} _ {b} \tilde { \hat{B} } _ { j } ) |,
    \end{equation}
    \begin{equation}\label{inequality b}
        |\mathrm{Tr}( \hat{\sigma} _ {b} \tilde {\hat{B} } _ {j} ) | \geq |\mathrm{Tr}( \hat{\sigma}_{b} \hat{B}_{j} ) | - |\mathrm{Tr}( \hat{\sigma} _ { b } \hat{B} _ { j } ) -\mathrm{Tr}( \hat{\sigma} _ {b} \tilde { \hat{B} } _ { j } ) |.
    \end{equation}
    From  Eq. (\ref{eq:error_bound}), we establish that
    \begin{equation}
        \left| \mathrm{Tr}(\hat{\rho}_b \tilde{\hat{B}}_j) - \mathrm{Tr}(\hat{\rho}_b \hat{B}_j) \right|\leq \eta,
    \end{equation}
    where $\eta=d^2\left( \frac{\xi}{2} + \sqrt{2d\xi} \right)$. By combining Eqs. \eqref{inequality a} and~\eqref{inequality b} and substituting them into Eq. \eqref{eq:Lambda_a erorr}, we obtain:
    \begin{equation}
        \begin{aligned}
            \tilde{\Lambda}_{b} \leq & \sum_{i=1}^{m} \left[ \mathrm{Tr}(\hat{B}_{i}\hat{\sigma}_{b}) +\eta \right]^{2}- \sum_{i=1}^{m} \left[ \mathrm{Tr}(\hat{B}_{i}\hat\rho_{b}) -\eta \right]^{2}.
        \end{aligned}
        \label{eq:Lambda_b_result}
    \end{equation}
    Through   simplification, we arrive at
    \begin{equation}
        \tilde{\Lambda}_{b}\leq\Lambda_{b}+2\sum_{i=1}^{m}\eta[\mathrm{Tr}(\hat{B}_{i}\hat{\sigma}_{b})+\mathrm{Tr}(\hat{B}_{i}\hat{\rho}_{b})].
    \end{equation}
    Apply H\"{o}lder's inequality for matrix norms given by
    \begin{equation}
        |\mathrm{Tr}(\hat{A}\hat{B}^\dagger)|\leqslant||\hat{A}||_F||\hat{B}||_F,
    \end{equation}
    where$||A||_F=\sqrt{||A||}$, we can obtain the following result
    \begin{equation}
        \mathrm{Tr}(\hat{B}_{i}\hat{\sigma}_{b})\leq \sqrt{\mathrm{Tr}(\hat{B}_{i}\hat{B}_{i}^{\dagger})\mathrm{Tr}(\hat{\sigma}_{b}\hat{\sigma}_{b}^{\dagger})}.
    \end{equation}
    Since \(\hat{\sigma}_b\) is a density operator satisfying \(\operatorname{Tr}(\hat{\sigma}_b) = 1\), the right-hand side of the inequality simplifies to \(\sqrt{\operatorname{Tr}(\hat{B}_i\hat{B}_i^\dagger)}\). Therefore, the right-hand side of the inequality reduces to
    \begin{equation}\label{proof11}
        \tilde{\Lambda}_{b}\leq\Lambda_{b}+2\sum_{i=1}^{m}\eta[1+\mathrm{Tr}(\hat{B}_{i}\hat{\rho}_{b})].
    \end{equation}

    By means of the same method, the left-hand side of the inequality can be derived using the triangle inequality. For each measurement setting \(i\), we have
    \begin{equation}
        \begin{split}
            \bigl| \Tr\bigl[ (\hat{A}_i \otimes \tilde{\hat{B}}_i) \, \hat{\rho}_{\mathrm{diff}} \bigr] \bigr|
            &\geq \bigl| \Tr\bigl[ (\hat{A}_i \otimes \hat{B}_i) \, \hat{\rho}_{\mathrm{diff}} \bigr] \bigr| \\
            &\quad - \bigl| \Tr\bigl[ (\hat{A}_i \otimes \Delta \hat{B}_i) \, \hat{\rho}_{\mathrm{diff}} \bigr] \bigr|,
        \end{split}
    \end{equation}
    where \(\Delta \hat{B}_i = \tilde{\hat{B}}_i - \hat{B}_i\) denotes the deviation between the implemented and the intended observables. \noindent The error term can be bounded further using H\"older's inequality for the Frobenius norm:
    \begin{equation}
        \left| \Tr\left[ (\hat{A}_i \otimes \Delta \hat{B}_i) \, \hat{\rho}_{\mathrm{diff}} \right] \right|
        \leq \Fnorm{\hat{A}_i} \; \Fnorm{\Delta \hat{B}_i} \; \Fnorm{\hat{\rho}_{\mathrm{diff}}}.
    \end{equation}

    We now turn to discuss each factor individually:
    from the error model, the deviation of the implemented observable satisfies \(\Fnorm{\Delta \hat{B}_i} \leq \sqrt{\xi}\). The Frobenius norm of \(\hat{A}_i\) is \(\Fnorm{\hat{A}_i} = \sqrt{\Tr(\hat{A}_i\hat{A}_i^\dagger)}\). For any bipartite state, the Frobenius norm of the difference matrix is bounded by \(\Fnorm{\hat{\rho}_{\mathrm{diff}}} \leq 1\) (a standard result for the deviation between a state and the product of its marginals). Substituting these results into the inequality yields
    \begin{equation}
        \left| \Tr\left[ (\hat{A}_i \otimes \Delta \hat{B}_i) \, \hat{\rho}_{\mathrm{diff}} \right] \right|
        \leq \sqrt{\xi} \; \sqrt{\Tr(\hat{A}_i\hat{A}_i^\dagger)}.
    \end{equation}

    \noindent Finally, summing over all measurement settings \(i = 1, \dots, m\), we obtain
    \begin{equation}\label{proof12}
        \begin{split}
            \sum_{i=1}^{m} \left| \Tr\left[ (\hat{A}_i \otimes \hat{B}_i) \, \hat{\rho}_{\mathrm{diff}} \right] \right|
            &- \sqrt{\xi} \sum_{i=1}^{m} \sqrt{\Tr(\hat{A}_i\hat{A}_i^\dagger)}
            \leq \\
            &\sum_{i=1}^{m} \left| \Tr\left[ (\hat{A}_i \otimes \tilde{\hat{B}}_i) \, \hat{\rho}_{\mathrm{diff}} \right] \right|.
        \end{split}
    \end{equation}
    By resorting to Eqs. \eqref{proof11} and \eqref{proof12}, we demonstrate that Eq. \eqref{proof10} has been proved.
\end{proof}

Next, the measurements are performed using two sets of local orthogonal bases $\{\mathcal{G}_i\}$ and $\{\mathcal{H}_j\}$, defined on the Hilbert spaces $\mathbb{C}^{d_1}$ and $\mathbb{C}^{d_2}$, respectively, where $d_1$ and $d_2$ are the local dimensions of the subsystem with $d_1 = d_2 = d$, and satisfy $\operatorname{Tr}(\mathcal{G}_i \mathcal{G}_j)=\operatorname{Tr}(\mathcal{H}_i \mathcal{H}_j)  = \delta_{ij}$. The correlation matrix constructed with the LOOs reads
\begin{equation}
    C(\mathcal{\hat{G}}, \mathcal{\hat{H}} \mid \hat{\rho}) = \left( \operatorname{Tr}\left((\mathcal{\hat{G}}_i \otimes \mathcal{\hat{H}}_j) \hat{\rho}_{\mathrm{diff}}\right) \right).
\end{equation}

Similarly, if Bob's measurements are untrusted, with $\mathrm{Tr}(\hat{\rho}_b \hat{B}_j) \leq 1$ and  $\Tr(\hat{A}_i\hat{A}_i^\dagger) \leq 1$}, the modified steering inequality becomes
\begin{align}
     & \|C(\mathcal{\hat{G}}, \mathcal{\hat{H}} \mid \hat{\rho})\|_{\mathrm{Tr}} - d^2\sqrt{\xi} \leqslant \nonumber                        \\
     & \sqrt{ \biggl(d_1 - \mathrm{Tr}\bigl(\hat{\rho}_a^2\bigr)\biggr)\biggl(1 - \mathrm{Tr}\bigl(\hat{\rho}_b^2\bigr) + 4d^2\eta\biggr)}.
    \label{2modyfy}
\end{align}}
Next, we will demonstrate through examples how small errors affect the steering inequality. Let us consider a asymmetric two-qubit state proposed in Ref. \cite{PhysRevA.106.042402}
\begin{equation}
    \hat\rho(p)=p|\psi^-\rangle\langle\psi^-|+\frac{1-p}{3}\left(2|0\rangle\langle0|\otimes\frac{\mathbb{I}}{2}+\frac{\mathbb{I}}{2}\otimes|1\rangle\langle1|\right),
\end{equation}
here $p \in[0,1]$ and $|\psi^-\rangle=\frac{1}{\sqrt{2}}(|01\rangle-|10\rangle)$ is the maximally entangled singlet state. If the measurements on Bob's side are trusted, i.e., deviations arising from device imperfections need not to be considered, and both parties perfectly implement the predefined measurement settings $(1, \sigma_x, \sigma_y, \sigma_z)/\sqrt{2}$, Lai and Luo  derived that $\rho(p)$ is steerable from Alice to Bob for $p>0.577$ and   steerable from Bob to Alice for $p>0.565$. Now if Bob's measurements have imprecision of $0.001\%$, then only when $p>0.716$, one can say the system is steerable from Alice to Bob, and it is steerable from Bob to Alice while $p>0.715$. In Fig. \ref{fig:tu1}, we plot the variation of parameter $p$ ensuring quantum steering as a function of the error $\xi$. However, in some specific numerical cases, our method does not outperform the previous scheme in Ref. \cite{PhysRevA.108.L040401} in terms of error tolerance and detection robustness. For example, for the asymmetric two-qubit state $\rho(p)$ analyzed in the text, under identical measurement errors, the steerability threshold $p$ derived from our modified inequality is generally looser than that in Ref. \cite{PhysRevA.108.L040401}, indicating a higher sensitivity to errors. This discrepancy may stem from the different ways the two methods handle error propagation: the derivation based on correlation matrices introduces additional terms during error transmission, which may lead to a more pronounced error amplification effect, making the corrected bound relatively conservative. Remarkably, our method still offers an independent and novel perspective for analyzing error influence from the viewpoint of correlation matrices, contributing to a more detailed understanding of the structural propagation of errors in high-dimensional systems.

\begin{figure}
    \begin{minipage}{0.45\textwidth}
        \centering

        \subfigure{\includegraphics[width=8cm]{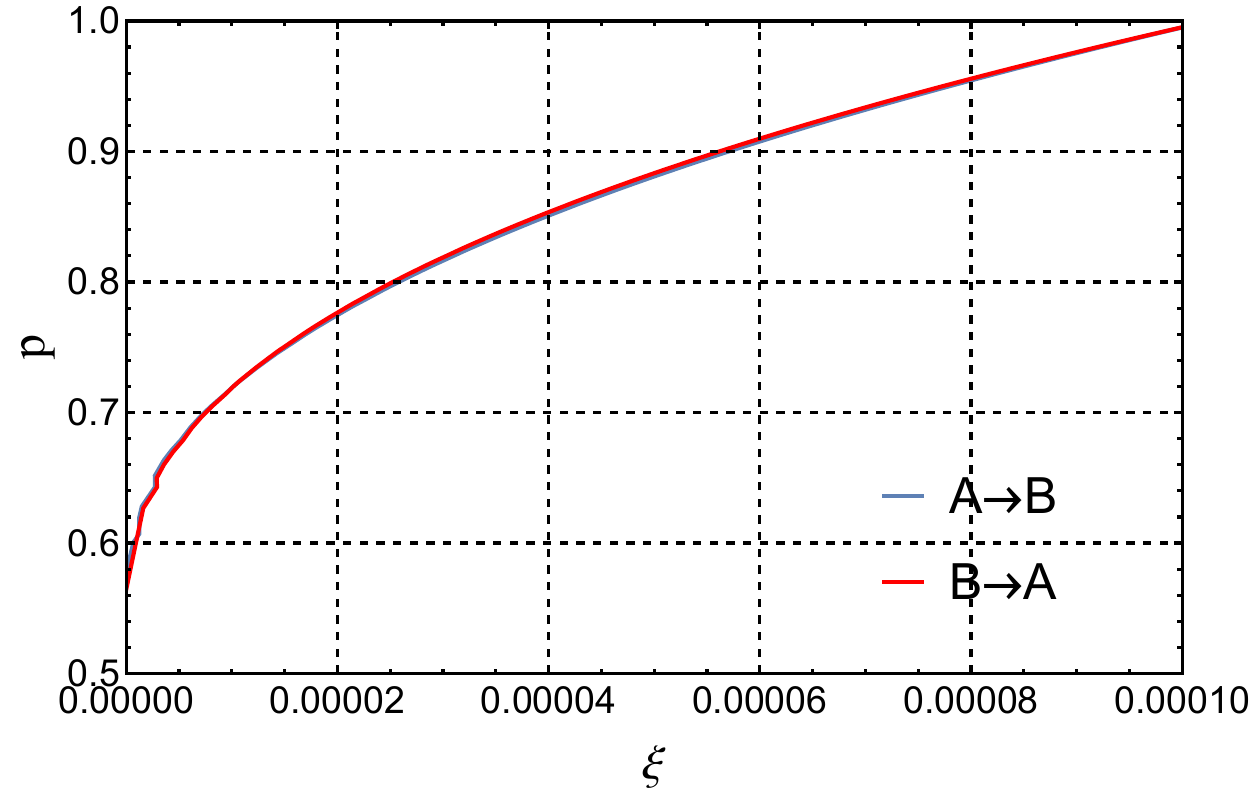}}
    \end{minipage}\hfill
    \caption{The dependence of the steerability parameter $p$ of a quantum state on the magnitude of experimental error, quantified in the range $\xi \in [0,10^{-4}]$. Here, $p$ represents the critical threshold for certifying quantum steering, with values $p \geq 0.577$ ($A\rightarrow{B}$) and $p \geq 0.565$ ($B\rightarrow{A}$)  indicating steerable states under ideal conditions. As the error $\xi$ increases from 0 to 0.01$\%$, the parameter $p$ monotonically increases  to 1, demonstrating a degradation of steerability with rising imperfections.}
    \label{fig:tu1}
\end{figure}

Quantum steering characterizes the nonlocal phenomenon where measurements performed on one subsystem of a bipartite quantum system instantaneously influence the state of the other subsystem. To investigate steering in a three-qubit system, it is necessary to partition the tripartite system into two distinct parties. Here, we adopt the bipartition, treating qubit A as an individual subsystem (\(\mathbb{C}^2\)) and combining qubits B and C into a joint subsystem (\(\mathbb{C}^2 \otimes \mathbb{C}^2 \simeq \mathbb{C}^4\)). This bipartition enables the representation of the composite system within the Hilbert space $\mathcal{H} =\mathbb{C}^2 \otimes \mathbb{C}^4$.

Besides, Li et al. derive a steering criterion for tripartite system in Ref. \cite{PhysRevA.110.012418}. For a tripartite quantum system partitioned into subsystems \(A\) and \(BC\), two sets of LOOs: \(G^A_p\) for subsystem \(A\) and \(G^{BC}_{q}\) for the joint subsystem \(BC\), are chosen to detect the steerability of the quantum state \(\hat\rho\). The LOOs are explicitly defined as:
\begin{equation}
    G^A_p = \frac{1}{\sqrt{2}} \, \hat{\sigma}_p \quad \text{for } p \in \{0, 1, 2, 3\},
\end{equation}
\begin{equation}
    G^{BC}_{q} = \frac{1}{2} \, \hat{\sigma}_j \otimes \hat{\sigma}_k \ \text{for } j,k \in \{0, 1, 2, 3\}, q \in \{0,1,...,15\}.
\end{equation}
Then, for an arbitrary tripartite quantum state \(\hat\rho_{ABC} = \frac{1}{8} \sum_{i,j,k=0}^3 \Theta_{ijk} \, \hat{\sigma}_i \otimes \hat{\sigma}_j \otimes \hat{\sigma}_k\), if the trace norm of the correlation matrix \(M\) satisfies:
\begin{equation}
    \| M \|_{\mathrm{Tr}} > \sqrt{\big(2 - \mathrm{Tr}(\hat{\rho}_A^2)\big)\big(1 - \mathrm{Tr}(\hat{\rho}_{BC}^2)\big)},
\end{equation}
then  \(\hat{\rho}_{ABC}\) is steerable from subsystem \(A\) to the joint subsystem \(BC\). Here, \(M\) is the correlation matrix constructed from
\(G_p^A\) and \(G_q^{BC}\), the two factors under the square roots correspond to the modified variance terms $\Lambda_A = 2 - \mathrm{Tr}(\hat{\rho}_A^2)$ and $\Lambda_{BC} = 1 - \mathrm{Tr}(\hat{\rho}_{BC}^2)$, respectively, which are derived from the correlation matrix approach introduced in Ref. \cite{PhysRevA.110.012418}, and $\Theta_{ijk} = \operatorname{Tr}(\hat\rho \, \hat\sigma_i \otimes \hat\sigma_j \otimes \hat\sigma_k)$, with matrix elements defined as:
\begin{equation}
    M_{pq} = \mathrm{Tr}\left[ \left( \hat{G}_p^A \otimes \hat{G}_q^{BC} \right) \big( \hat{\rho}_{ABC} - \hat{\rho}_A \otimes \hat{\rho}_{BC} \big) \right].
\end{equation}
Here, we can similarly treat the set \(\{G\}\) as a complete set of measurement bases. The discrepancy in measurement errors on Bob's side between the bipartite and tripartite scenarios stems purely from the difference in dimensionality. Specifically, the joint subsystem \(BC\) (spanning \(\mathbb{C}^4\)) introduces higher-dimensional noise propagation, whereas the mathematical structure of the error model remains analogous to the bipartite case.
\begin{proposition}\label{proposition2}
    With the above notation, if the state $\hat{\rho}$ on $\mathbb{C}^d \otimes \mathbb{C}^{d^2}$ is steerable from $A$ to $BC$, then
    \begin{equation}
        \| M \|_\mathrm{Tr} -d^3\sqrt{\xi} > \sqrt{ \biggl(d - \mathrm{Tr}(\hat{\rho}_A^2) \biggr) \biggl(1 - \mathrm{Tr}(\hat{\rho}_{BC}^2) + 4d^4\eta \biggr)}.
        \label{3modyfy}
    \end{equation}
\end{proposition}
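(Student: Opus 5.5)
The plan is to reduce Proposition~\ref{proposition2} to Proposition~1 applied to the bipartition $A|BC$. We view $\hat{\rho}_{ABC}$ as a bipartite state on $\mathbb{C}^{d}\otimes\mathbb{C}^{d^2}$. Alice measures the LOO basis $\{G^A_p\}$, which has $d^2$ elements. The joint party $BC$ attempts the LOO basis $\{G^{BC}_q\}$, which has $d^4$ elements, but actually implements $\{\tilde G^{BC}_q\}$ with $\|\tilde G^{BC}_q-G^{BC}_q\|\le\xi$. The statement is read as the contrapositive of a steering witness: if the inequality in Eq.~\eqref{3modyfy} holds, the state is steerable. Equivalently, for any state unsteerable from $A$ to $BC$ the reverse non-strict inequality holds. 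That is what I will prove, exactly as in the bipartite corrected inequality \eqref{2modyfy}.

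First, I invoke the correlation-matrix criterion of Ref.~\cite{PhysRevA.110.012418} for the implemented observables. For an unsteerable state, the trace norm of the matrix $\tilde M$ built from $G^A_p$ and $\tilde G^{BC}_q$ is at most $\sqrt{\Lambda_A\tilde\Lambda_{BC}}$. Because the bases are LOOs, $\Lambda_A=d-\mathrm{Tr}(\hat\rho_A^2)$. To pass from the trace norm to the sum form used in Proposition~1, I use the variational characterization $\|M\|_{\mathrm{Tr}}=\max_{U,V}\mathrm{Tr}(UMV)$ over unitaries. Rotating Alice's basis by $U$ and $BC$'s basis by $V$ yields new LOOs whose diagonal correlation sum equals $\|M\|_{\mathrm{Tr}}$. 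This lets me apply Proposition~1 with $m$ equal to the number of basis elements.

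Second, I bound the right-hand side. Following the proof of Proposition~1 with local dimension $d^2$ in place of $d$, the per-element error is $\eta' = (d^2)^2(\xi/2+\sqrt{2d^2\xi})$. I will absorb this, as the paper does, into a factor $\eta$ with the bound $\tilde\Lambda_{BC}\le \Lambda_{BC}+2\sum_q\eta[1+\mathrm{Tr}(G^{BC}_q\hat\rho_{BC})]$. Using $\mathrm{Tr}(G^{BC}_q\hat\rho_{BC})\le1$, each summand is at most $2\eta$. There are $d^4$ terms in the sum over $q$, which gives $1-\mathrm{Tr}(\hat\rho_{BC}^2)+4d^4\eta$.

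Third, I bound the left-hand side. I apply the triangle inequality followed by Frobenius H\"older, $|\mathrm{Tr}[(G^A_p\otimes\Delta G_q)\hat\rho_{\mathrm{diff}}]|\le\|G^A_p\|_F\|\Delta G_q\|_F\|\hat\rho_{\mathrm{diff}}\|_F\le\sqrt{\xi}$, using $\mathrm{Tr}(G^A_pG^{A\dagger}_p)=1$ and $\|\hat\rho_{\mathrm{diff}}\|_F\le1$. Summing over the diagonal terms after the unitary rotation gives the $d^3\sqrt{\xi}$ correction. Combining the two bounds gives the contrapositive of Eq.~\eqref{3modyfy}.

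The main obstacle is this left-hand counting. The number of error terms to sum depends on whether the diagonal runs over $\min(d^2,d^4)=d^2$ pairs or over the full $d^2\times d^4$ array. Also, the rotated $BC$ basis $V$ mixes the errors, so $\|\sum_q V_{pq}\Delta G_q\|_F$ must be controlled. I would first try Cauchy--Schwarz across $q$. With $d^4$ elements each contributing $\sqrt{\xi}$, this gives a per-row error of at most $\sqrt{d^4\xi}$, hence a total of at most $d^2\cdot d^2\sqrt{\xi}$, which is weaker than the stated bound. I would therefore then try to match $d^3\sqrt{\xi}$ through a sharper estimate, such as $\|\hat\rho_{\mathrm{diff}}\|_F$ or a unitary-invariance argument. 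Failing that, I would accept a constant adjusted to the counting convention used in Eq.~\eqref{2modyfy}.
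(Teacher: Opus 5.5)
Your reduction to the $A|BC$ cut is the same route as the paper. The paper gives no separate argument for Proposition~\ref{proposition2}; it simply reruns the bipartite bound \eqref{2modyfy} with $BC$ treated as a $d^2$-dimensional party. Your right-hand count ($d^4$ basis elements, each contributing at most $4\eta$) reproduces its $4d^4\eta$, and your reading of the statement as ``violation implies steerability'' is the intended one.

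The genuine gap is the left-hand term. You never reach $d^3\sqrt{\xi}$, and falling back on ``a constant adjusted to the counting convention'' does not prove the stated inequality, because your $d^4\sqrt{\xi}$ gives a strictly weaker witness. You also should not literally apply Proposition~1 to the rotated bases: the rotated deviations $\Delta G'_k=\sum_q V_{kq}\Delta G_q$ violate its per-element hypothesis. Instead, apply the ideal criterion to the implemented, unrotated matrix $\tilde M$. Then $\tilde\Lambda_{BC}$ is bounded in the original basis, where $|\mathrm{Tr}(\sigma\,\Delta G_q)|\le\eta$ holds. Bound $\|M\|_{\mathrm{Tr}}-\|\tilde M\|_{\mathrm{Tr}}$ separately by the diagonal error sum, which is what your third step effectively does.

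The missing estimate is that your per-row bound $\|\Delta G'_k\|_F\le\sqrt{d^4\xi}$ cannot be saturated by all $d^2$ orthonormal rows of $V$ at once. Take the positive semidefinite Gram matrix $W_{qq'}=\mathrm{Tr}(\Delta G_q\Delta G_{q'})$ and the projector $V^{T}V$:
\[
\sum_{k=1}^{d^2}\|\Delta G'_k\|_F^2=\mathrm{Tr}\bigl(W V^{T}V\bigr)\le \mathrm{Tr}\,W=\sum_{q=1}^{d^4}\|\Delta G_q\|_F^2\le d^4\xi .
\]
Cauchy--Schwarz over the $d^2$ singular directions then gives $\sum_k\|\Delta G'_k\|_F\le\sqrt{d^2}\,\sqrt{d^4\xi}=d^3\sqrt{\xi}$, exactly the stated constant. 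This also settles your counting question: the diagonal has $\min(d^2,d^4)=d^2$ entries, and $d^4$ enters only through the total squared error. The same computation with $d^2$ in place of $d^4$ gives the $d^2\sqrt{\xi}$ of \eqref{2modyfy}. If you instead apply H\"older once to the signed sum $\mathrm{Tr}\bigl[\hat\rho_{\mathrm{diff}}\sum_k s_k G'_k\otimes\Delta G'_k\bigr]$ with $s_k=\pm1$, using orthonormality of the $G'_k$, you even get $d^2\sqrt{\xi}\,\|\hat\rho_{\mathrm{diff}}\|_F$. With either bound, $\|M\|_{\mathrm{Tr}}-d^3\sqrt{\xi}\le\|\tilde M\|_{\mathrm{Tr}}\le\sqrt{\Lambda_A\tilde\Lambda_{BC}}$ for unsteerable states, and your right-hand estimate finishes the proof.

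A minor point: $\eta'=d^4(\xi/2+\sqrt{2d^2\xi})$ exceeds the single-qudit $\eta$, so it cannot be ``absorbed'' into it. Either read $\eta$ with Bob's dimension $d^2$, or use $|\mathrm{Tr}(\sigma\,\Delta G_q)|\le\|\sigma\|_F\|\Delta G_q\|_F\le\sqrt{\xi}\le\eta$ directly.
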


In addition, leveraging the intrinsic asymmetry of quantum steering, we establish a criterion to verify whether the joint system \( BC \) can steer subsystem \( A \) \cite{PhysRevA.110.012418}. The protocol proceeds as follows: First, we apply a cyclic permutation operator to transform the tripartite state \( \hat\rho_{ABC} \) into \( \hat\rho_{BCA} \), explicitly given by
\begin{equation}
    \hat\rho_{BCA} = \frac{1}{8} \sum_{i,j,k=0}^{3} \Theta_{ijk} \, \hat{\sigma}_j \otimes \hat{\sigma}_k \otimes \hat{\sigma}_i.
\end{equation}
The steering criterion then states that \( \hat\rho_{ABC} \) is steerable from \( BC \) to \( A \) if the following inequality holds:
\begin{equation}
    \|M'\|_{\mathrm{tr}} > \sqrt{\biggl[4 - \mathrm{Tr}\bigl(\hat{\rho}_{BC}^2\bigr)\biggr] \biggl[1 - \mathrm{Tr}\bigl(\hat{\rho}_{A}^2\bigr)\biggr]},
\end{equation}
where \( M' \) is the correlation matrix constructed from LOOs \(\{G_q^{BC}\}\) and \(\{G_p^{A}\}\). The matrix elements are defined as:
\begin{align}
    M'_{qp} & = \mathrm{Tr}\left[\left(\hat{G}_q^{BC} \otimes \hat{G}_p^{A}\right)\left(\hat{\rho}_{BCA} - \hat{\rho}_{BC} \otimes \hat{\rho}_{A}\right)\right] \nonumber \\
            & = \Theta_{[q/4](q/4)p} - \Theta_{[q/4](q/4)0}\Theta_{00p}.
\end{align}

\begin{proposition}\label{proposition3}
    In the scenario where \( BC \) steers \( A \), the measurement imperfections are confined solely to the untrusted measurements on Alice's side. The modified steering inequality under such conditions is given by:
    \begin{equation}
        \|M'\|_{\mathrm{tr}} -d^3\sqrt{\xi} > \sqrt{ \biggl(d^2 - \mathrm{Tr}(\hat{\rho}_{BC}^2) \biggr) \biggl(1 - \mathrm{Tr}(\hat{\rho}_A^2) + 4d^2\eta \biggr) }. \label{555}
    \end{equation}
\end{proposition}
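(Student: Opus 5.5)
The plan is to mirror the proof of Proposition 1 with the roles of the parties exchanged. In the $BC\to A$ scenario the steering party is the joint system $BC$ (local dimension $d^2$), and the steered party $A$ (local dimension $d$) is the one whose measurements are supposed to be trusted but are implemented imperfectly as $\tilde{G}^A_p$ with $\|\tilde{G}^A_p - G^A_p\|\le\xi$. I start from the criterion of Ref.~\cite{PhysRevA.110.012418} applied to the imperfect observables. If $\hat\rho_{BCA}$ were unsteerable from $BC$ to $A$, then
\begin{equation*}
\|\tilde M'\|_{\mathrm{tr}} \le \sqrt{\Lambda_{BC}\,\tilde\Lambda_A}, \qquad \Lambda_{BC}=d^2-\mathrm{Tr}(\hat\rho_{BC}^2),
\end{equation*}
where $\tilde M'$ is built from $G^{BC}_q\otimes\tilde G^A_p$. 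Here $\tilde\Lambda_A$ is defined as in Eq.~(\ref{eq:Lambda_a erorr}) with $\tilde G^A_p$ in place of $\tilde B_j$, and its ideal value is $1-\mathrm{Tr}(\hat\rho_A^2)$ for an LOO basis. The statement then follows by contraposition once I bound $\tilde\Lambda_A$ from above and $\|\tilde M'\|_{\mathrm{tr}}$ from below in terms of the ideal quantities.

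\textbf{Variance factor.} I repeat the triangle-inequality step (\ref{inequality a})--(\ref{inequality b}) for each of the $d^2$ operators $G^A_p$, using the bound $|\mathrm{Tr}(\hat\sigma\tilde G^A_p)-\mathrm{Tr}(\hat\sigma G^A_p)|\le\eta$ from Eq.~(\ref{eq:error_bound}). This gives
\begin{equation*}
\tilde\Lambda_A\le 1-\mathrm{Tr}(\hat\rho_A^2)+2\eta\sum_{p}\left[\mathrm{Tr}(G^A_p\hat\sigma)+\mathrm{Tr}(G^A_p\hat\rho_A)\right].
\end{equation*}
Each bracket is at most $2$, by the H\"older bound $\mathrm{Tr}(G\hat\sigma)\le\|G\|_F\|\hat\sigma\|_F\le 1$ and the assumption $\mathrm{Tr}(\hat\rho G)\le1$. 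Summing over the $d^2$ values of $p$ yields the correction $4d^2\eta$, exactly as in (\ref{2modyfy}). The dimension entering here is that of $A$, which is why the factor is $d^2$ rather than the $d^4$ of Proposition~\ref{proposition2}.

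\textbf{Correlation matrix.} For the left-hand side I write $\tilde M'=M'+\Delta$ with $\Delta_{qp}=\mathrm{Tr}[(G^{BC}_q\otimes\Delta G^A_p)\hat\rho_{\mathrm{diff}}]$. The triangle inequality for the trace norm gives $\|\tilde M'\|_{\mathrm{tr}}\ge\|M'\|_{\mathrm{tr}}-\|\Delta\|_{\mathrm{tr}}$. To control $\|\Delta\|_{\mathrm{tr}}$ I bound it entrywise, $\|\Delta\|_{\mathrm{tr}}\le\sum_{q,p}|\Delta_{qp}|$. Then I apply the Frobenius H\"older bound exactly as in Proposition 1, using $\|G^{BC}_q\|_F=1$, $\|\Delta G^A_p\|_F\le\sqrt\xi$ and $\|\hat\rho_{\mathrm{diff}}\|_F\le1$, so each entry is at most $\sqrt\xi$.

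\textbf{Main obstacle.} The difficulty is the counting in the correlation-matrix step. The entrywise sum runs over $d^4\cdot d^2$ entries, which is too crude. Instead I will use $\|\Delta\|_{\mathrm{tr}}\le\sqrt{\mathrm{rank}}\,\|\Delta\|_F$: the rank is at most $d^2$ (the number of columns), and the Frobenius norm is controlled column by column. Summing over $q$ exploits the orthonormality of $\{G^{BC}_q\}$, and I will try to show that the resulting bound is $\|\Delta\|_{\mathrm{tr}}\le d^3\sqrt\xi$. Combining the two bounds with the unsteerability inequality and taking the contrapositive gives (\ref{555}).
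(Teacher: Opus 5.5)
Your proof is correct and follows the route the paper intends. The paper gives no separate argument for Proposition~3; it only remarks that the error model is ``analogous to the bipartite case,'' i.e.\ Proposition~1 rerun with the parties exchanged and the dimensions recounted. Your variance step is exactly that: the $4d^2\eta$ comes from the $d^2$ operators $G^A_p$, with $\eta$ evaluated at $A$'s dimension. Defining $\tilde\Lambda_A$ through the general-observable formula is what makes the starting inequality legitimate, since $\{\tilde G^A_p\}$ is no longer an LOO basis.

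Where you depart from the paper is the correlation-matrix step. Proposition~1 only controls a fixed diagonal sum term by term with H\"older. The trace-norm corrections ($d^2\sqrt{\xi}$ bipartite, $d^3\sqrt{\xi}$ here) are then written down by dimensional analogy. You instead bound $\|\Delta\|_{\mathrm{tr}}$ directly, which actually justifies the passage to $\|M'\|_{\mathrm{tr}}$; the paper does not spell that out.

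The step you left open closes easily. For fixed $p$ let $Z_p=\Tr_A[(\mathbb{I}\otimes\Delta G^A_p)\hat\rho_{\mathrm{diff}}]$, so that $\Delta_{qp}=\Tr(G^{BC}_qZ_p)$. By Parseval in the orthonormal basis $\{G^{BC}_q\}$, the $p$-th column of $\Delta$ has Euclidean norm $\Fnorm{Z_p}$. By duality, for every $W$,
\[
|\Tr(W^\dagger Z_p)|=|\Tr[(W^\dagger\otimes\Delta G^A_p)\hat\rho_{\mathrm{diff}}]|\le\Fnorm{W}\,\Fnorm{\Delta G^A_p}\,\Fnorm{\hat\rho_{\mathrm{diff}}}.
\]
Hence $\Fnorm{Z_p}\le\sqrt{\xi}$ and $\Fnorm{\Delta}\le d\sqrt{\xi}$, so
\[
\|\Delta\|_{\mathrm{tr}}\le\sqrt{\operatorname{rank}\Delta}\,\Fnorm{\Delta}\le d^2\sqrt{\xi}\le d^3\sqrt{\xi}.
\]
Your route therefore proves the statement with a factor $d$ to spare. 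The orthonormality of $\{G^{BC}_q\}$ is the essential input: entrywise bounds alone give at best $d^4\sqrt{\xi}$. The paper's analogy buys brevity but not this sharper constant.

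One small repair is needed in your variance step. The squared lower bound $(|\Tr(G^A_p\hat\rho_A)|-\eta)^2$ requires $|\Tr(G^A_p\hat\rho_A)|\ge\eta$, and your brackets need absolute values. It is cleaner to write $\tilde y^2-y^2=(\tilde y-y)(\tilde y+y)$, with $y=\Tr(G^A_p\hat\sigma)$ and $\tilde y=\Tr(\tilde G^A_p\hat\sigma)$, and likewise for $\hat\rho_A$. Then use $|\tilde y-y|\le\sqrt{\xi}$ and $|\tilde y|+|y|\le 2+\sqrt{\xi}$. The resulting total correction $d^2(4\sqrt{\xi}+2\xi)$ lies below $4d^2\eta$.
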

Here, proposition \ref{proposition2} and proposition \ref{proposition3} formulate steering criterion for tripartite systems with imprecise measurements, we thus introduce gap values defined as the difference between the left- and right-hand sides of Eqs. (\ref{2modyfy}), (\ref{3modyfy}) and (\ref{555})
\begin{equation}
    \begin{split}
        H_{A \rightarrow B}&=\|C(\mathcal{\hat{G}}, \mathcal{\hat{H}} \mid \hat{\rho})\|_{\mathrm{Tr}} - d^2\sqrt{\xi} \\
        &-  \sqrt{ \biggl(d_1 - \mathrm{Tr}\bigl(\hat{\rho}_a^2\bigr)\biggr)\biggl(1 - \mathrm{Tr}\bigl(\hat{\rho}_b^2\bigr) + 4d^2\eta\biggr)},
    \end{split}
\end{equation}
\begin{equation}
    \begin{split}
        H_{A \rightarrow BC}&=\| M \|_\mathrm{Tr} - d^3\sqrt{\xi} \\
        &-\sqrt{ \biggl(d - \mathrm{Tr}(\hat{\rho}_A^2) \biggr) \biggl(1 - \mathrm{Tr}(\hat{\rho}_{BC}^2) + 4d^4\eta \biggr)},
    \end{split}
\end{equation}
\begin{equation}
    \begin{split}
        H_{BC \rightarrow A}&=\|M'\|_{\mathrm{tr}} - d^3\sqrt{\xi} \\
        &- \sqrt{ \biggl(d^2 - \mathrm{Tr}(\hat{\rho}_{BC}^2) \biggr) \biggl(1 - \mathrm{Tr}(\hat{\rho}_A^2) + 4d^2\eta \biggr) }.
    \end{split}
\end{equation}
The presence of steering from $A$ to $B$, from $A$ to $BC$, and from $BC$ to $A$ is directly quantified by the values of $H_{A\rightarrow B}$, $H_{A\rightarrow BC}$, and $H_{BC\rightarrow A}$, respectively. Specifically, $H_{A\rightarrow B}>0$, $H_{A\rightarrow BC}>0$, or $H_{BC\rightarrow A}>0$ signify the emergence of quantum steering.

We first analyze a typical three-qubit quantum state, the generalized Greenberger-Horne-Zeilinger (GHZ) state, which is formally defined by the form of
\begin{equation}
    |\psi\rangle_{\text{GHZ}} = \sin\theta|000\rangle + \cos\theta|111\rangle,
    \label{eq:ghz_state}
\end{equation}
where $0 \leq \theta \leq \pi/2$. For this state, one can easily work out that
\begin{align}
     & \| M \|_\mathrm{Tr} = 2\cos\theta \sin\theta + 2\cos^2\theta \sin^2\theta, \label{eq:trace_norm}
    \\&2 - \mathrm{Tr}(\hat\rho_A^2) = \frac{1}{4}(5 - \cos 4\theta), \label{eq:state_purity}
    \\&1 - \mathrm{Tr}(\hat\rho_{BC}^2) = 2\cos^2\theta \sin^2\theta. \label{eq:reduced_state}
\end{align}
These expressions are then substituted into the modified steering inequality Eqs. (\ref{3modyfy}) to assess the robustness of steering under measurement imperfections.
To probe the effect of the error on steerability, we plot the quantum steerability $ H_{A \rightarrow BC}$ as functions of the coefficient $\theta$ and the error $\xi$ in Fig. \ref{fig:tu2}. From the figure, it can be clearly observed that as $\xi$ increases, the requirements on $\theta$ to ensure system steerability become more stringent. Notably,  the   system will be fully unsteerable when the error $\xi$ is beyond a critical threshold (e.g., $\xi>4.82\times 10^{-5}$ in Fig. \ref{fig:tu2}). Interestingly, it can be found that the curve is not smooth when the errors $\xi$ are relatively small. This is because of the trace norm within the formulas (\ref{3modyfy}), which  essentially  guarantees the strict validity of the modified inequality.

\begin{figure}
    \centering
    \subfigure{\includegraphics[width=9cm]{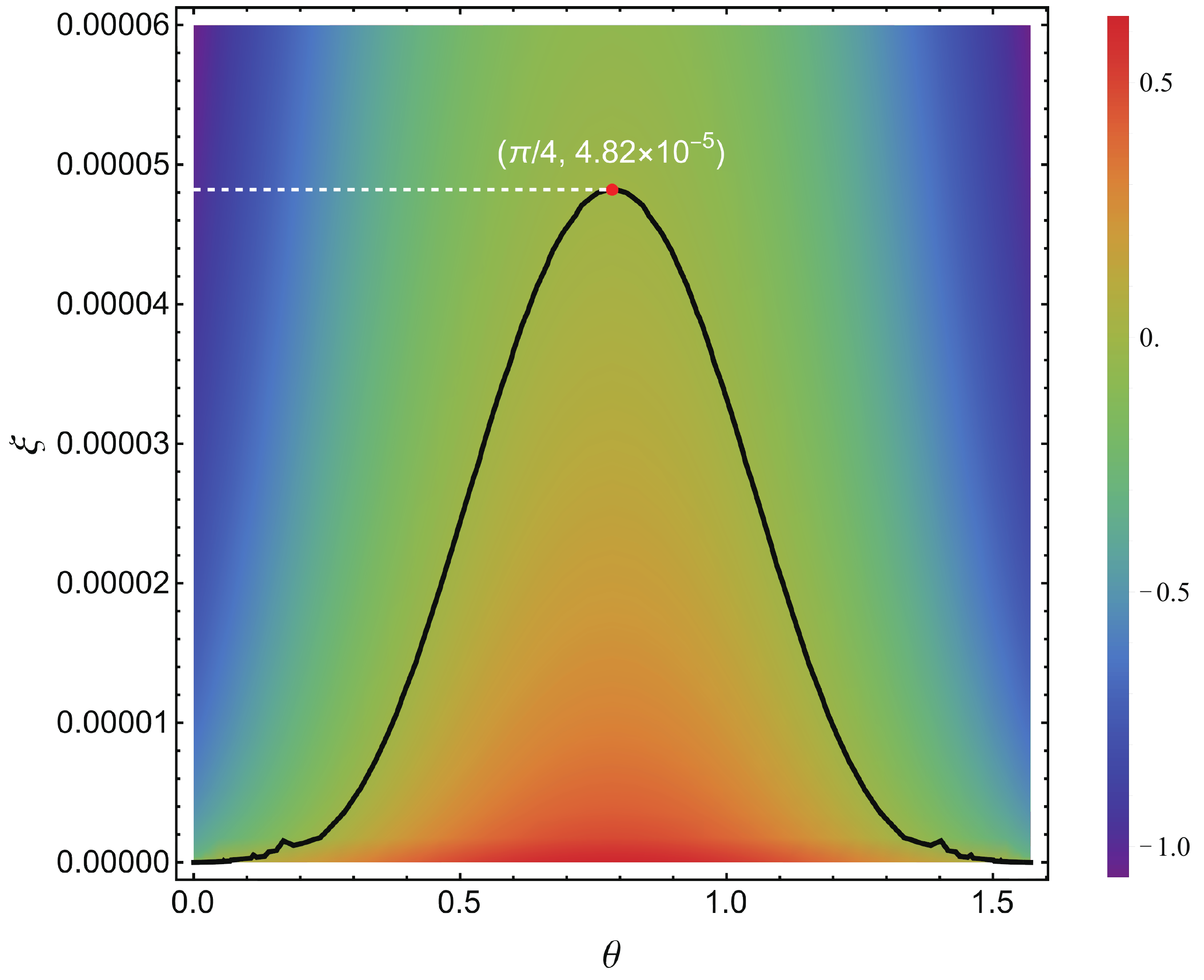}}
    \caption{The quantum steerability $ H_{A \rightarrow BC}$ with respect to measurement imperfection $\xi$ and the state's parameter $\theta$. The black solid line represents the zero-valued steering (i.e., $ H_{A \rightarrow BC}=0$). The black solid line denotes the critical steering boundary. To be explicit, the states  are steerable in the regions of below the boundary, contrarily, the systems are unsteerable in the regions of  above the boundary. When $\xi$ exceeds a critical value, the system becomes completely unsteerable which is independent of $\theta$.}
    \label{fig:tu2}
\end{figure}


To clearly demonstrate the impact of dimensionality on error sensitivity, we consider a $d$-dimensional maximally entangled state (generalized Bell state):
\begin{equation}
    |\Phi_d\rangle = \frac{1}{\sqrt{d}} \sum_{i=0}^{d-1} |ii\rangle,
\end{equation}
here we choose the generalized Pauli bases for measurements. According to our derived results (\ref{proof10}), (\ref{3modyfy}) and (\ref{555}), one can analytically obtain the influence of the error $\xi$ on steering inequality  under different dimensions $d$. To compare the dimension-dependent error sensitivity, we plot the  $H_{A \rightarrow B}, H_{A \rightarrow BC}$, and $H_{BC \rightarrow A}$ for different dimensions as a function of the error $\xi$ in Fig.~\ref{f1}.


\begin{figure*}
    \begin{minipage}{0.32\textwidth}
        \centering
        \includegraphics[width=6cm]{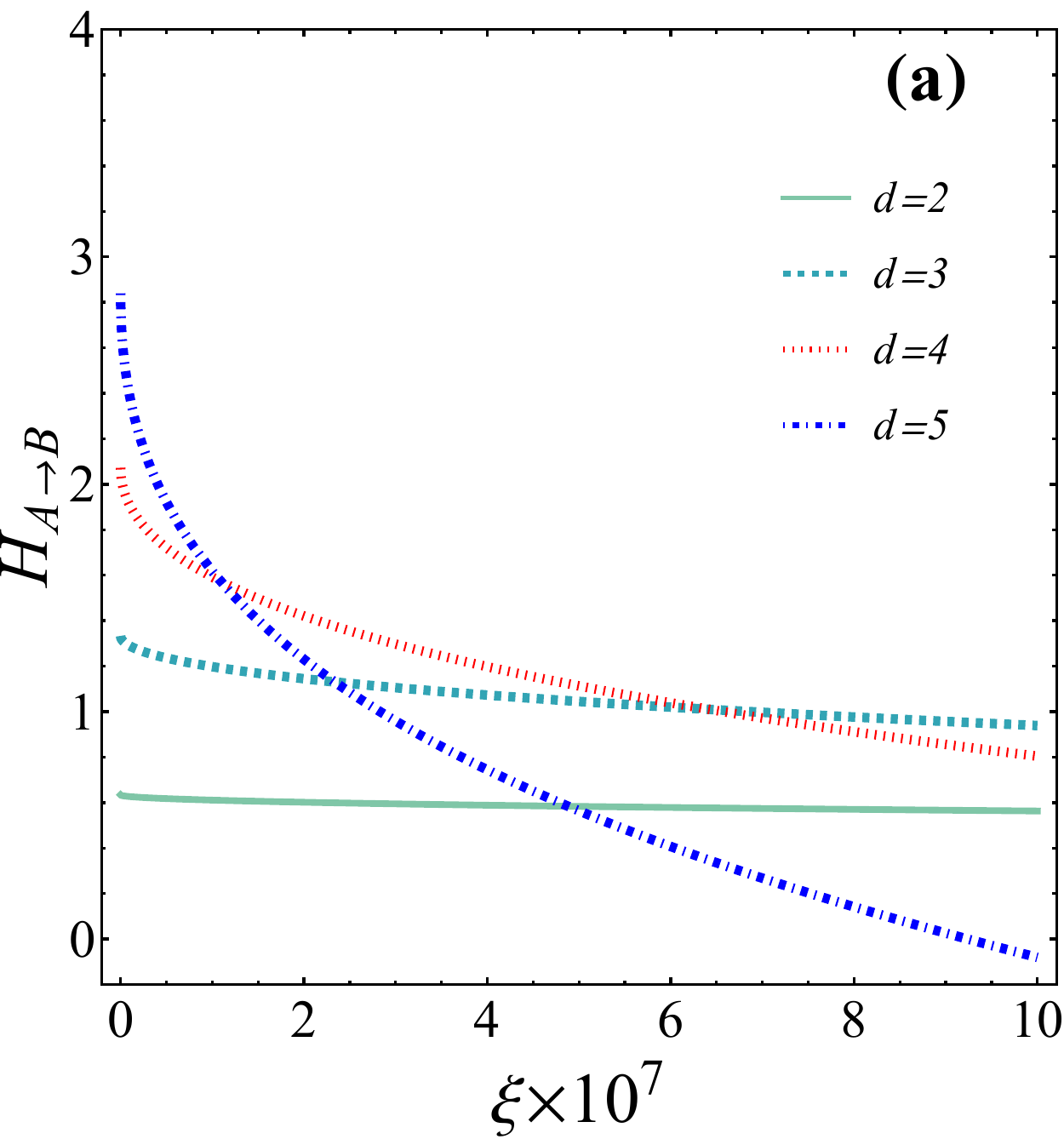} 
        \label{fig:side:a}
    \end{minipage}
    \hfill 
    \begin{minipage}{0.32\textwidth}
        \centering
        \includegraphics[width=6cm]{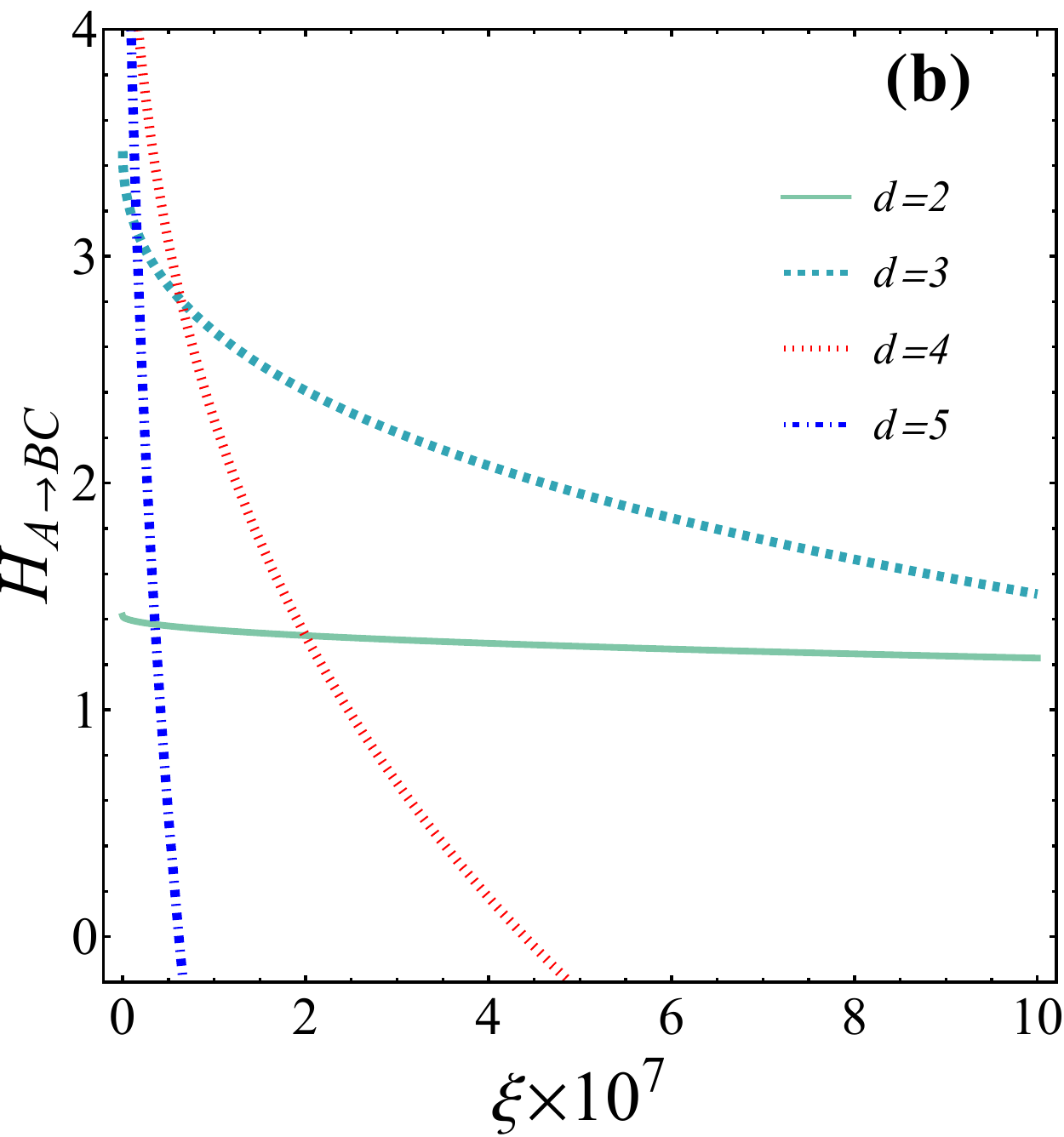} 
        \label{fig:side:b}
    \end{minipage}
    \hfill 
    \begin{minipage}{0.32\textwidth}
        \centering
        \includegraphics[width=6cm]{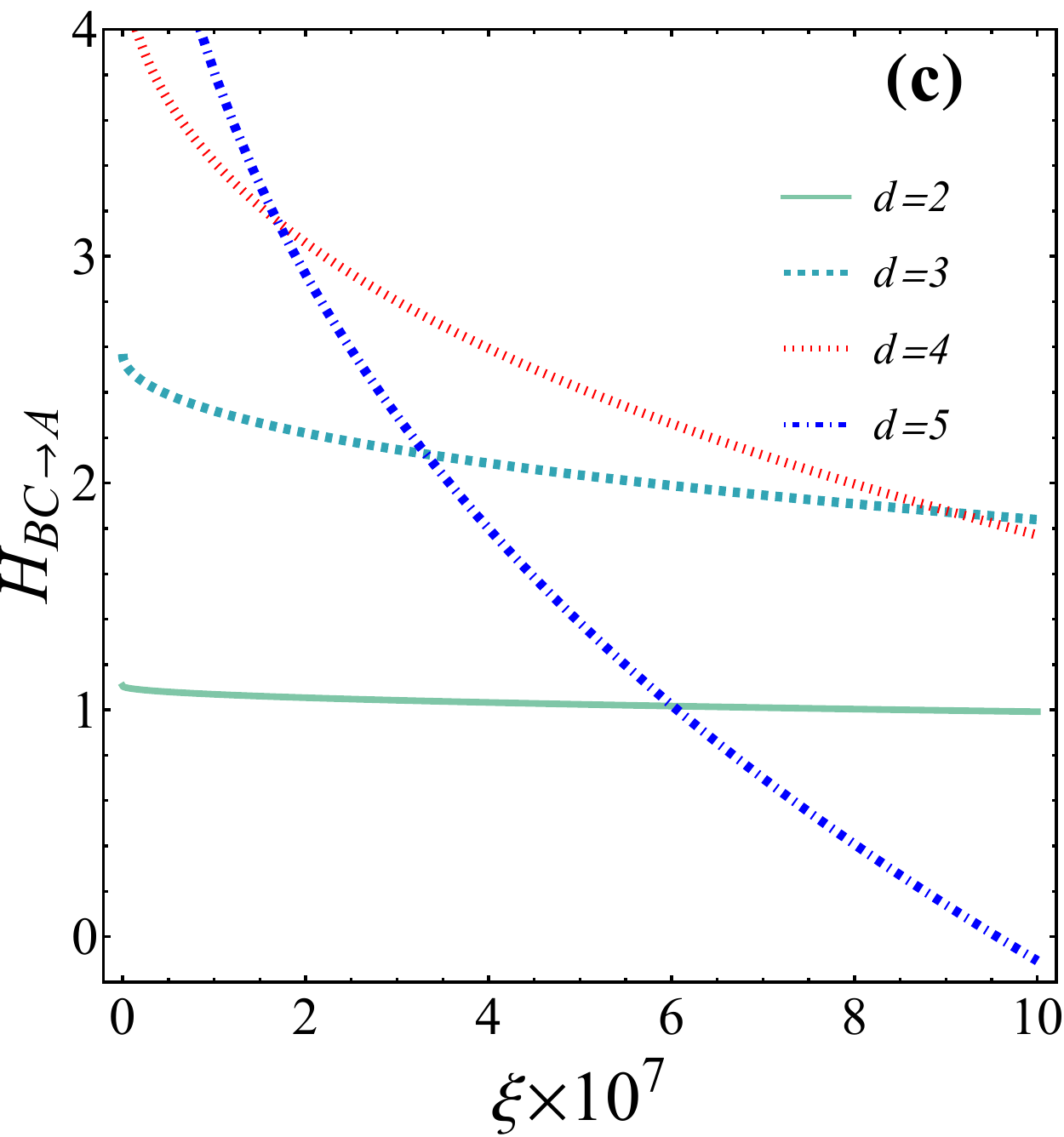} 
        \label{fig:side:c}
    \end{minipage}
    \caption{ Comparison of dimension-dependent error sensitivity in quantum steering protocols. (a) Bipartite systems ($A \rightarrow B$): Steering detectability degrades with increasing measurement error $\xi$, with higher-dimensional systems showing accelerated degradation.  (b) Tripartite $A \rightarrow BC$ steering: the joint subsystem $BC$ (effectively $d^2$-dimensional) introduces amplified noise propagation, making this configuration particularly vulnerable to measurement imperfections. (c) Tripartite $BC \rightarrow A$ steering: the error effects are confined to the single untrusted subsystem $A$, leading to  robustness against quantum imperfection in comparison with the case in Graph (b).}
    \label{f1}
\end{figure*}
In Fig.~\ref{f1} (a), for bipartite systems, the steering parameter $H_{A \rightarrow B}$ declines rapidly as either the error $\xi$ or the dimension $d$ increases, indicating that even small measurement inaccuracies can severely compromise steerability in high-dimensional systems and underscoring the exponentially growing demand for measurement precision. Fig.~\ref{f1} (b) depicts the tripartite case where subsystem $A$ steers the joint subsystem $BC$: $H_{A \rightarrow BC}$ exhibits an exceptionally sharp drop with the increasing $\xi$ or $d$. This is because of the effectively $d^2$-dimensional nature of $BC$ and the fact that both $B$ and $C$ are untrusted, the errors are amplified quadratically, resulting in the significant reduction of $H_{A \rightarrow BC}$ with the increasing $\xi$ or $d$. In contrast, Fig.~\ref{f1} (c) shows the case from $BC$ to $A$, where measurement imperfections are confined solely to the single untrusted party $A$; here, the decrease in $H_{BC \rightarrow A}$ is noticeably more gradual, reflecting substantially enhanced robustness due to restricted error propagation. We therefore conclude that the dimension $d$ constitutes a critical determinant of measurement error sensitivity in the multipartite quantum steering protocols. Moreover, our analysis indicates that, for multipartite steering protocols,  a reduced number of untrusted parties enhances robustness against measurement imperfections.

In tripartite systems, the impact of measurement inaccuracies on steering inequalities was analyzed under two scenarios: one party steers a joint two-party subsystem and two parties jointly steer a single party. For the former case, the steering inequality involves higher-dimensional measurements (e.g., $BC$ as a $d^2$-dimensional system), where errors scale with dimensionality, amplifying noise sensitivity. For the latter case, where $BC$ steers $A$, errors only affect the single-party subsystem ($A$), leading to reduced error propagation. Steering from two parties to one exhibits superior robustness against measurement errors compared to steering from one party to two, due to lower-dimensional error propagation and reduced cumulative noise effects in the former case. This asymmetry highlights the importance of error-aware partitioning in multipartite steering protocols.
\section{Discussion and Conclusion}\label{Sec5}

{Multipartite quantum steering serves as a foundational resource for advancing quantum communication and computation within distributed quantum networks. In this work, we systematically investigated the impact of measurement inaccuracies on steering detection, particularly focusing on arbitrary dimensional bipartite systems and tripartite systems. Our primary contribution lies in deriving robust steering criteria for correlation matrices under imperfect measurements, demonstrating that even minor deviations in trusted devices such as Bob's measurement operators significantly degrade steering verification, with error sensitivity scaling for \(d\)-dimensional systems, reference \cite{PhysRevA.108.L040401, PhysRevLett.132.070204} also notes this point. We further extended these criteria to tripartite scenarios, establishing modified inequalities for bipartition systems, and discovered that in multipartite steering, under the same error level, the number of steered parties significantly affects measurement accuracy, and this effect combines with dimensional amplification, thereby further reinforcing the amplified noise propagation in high‑dimensional subspaces.

    Notably, we proved that the steering thresholds based on trace norm are fundamentally altered by measurement errors, as exemplified by the asymmetric two-qubit state \(\hat\rho(p)\), where a \(0.01\%\) error elevates the steerability threshold from \(p > 0.577\) to \(p > 0.766\). For tripartite systems, we formulated a cyclic permutation protocol to analyze steering from \(BC\) to \(A\), revealing that errors confined to untrusted measurements on \(A\)'s side introduce additive noise terms proportional to \(\eta\) in the modified inequality.

    To validate our theoretical framework, we employed numerical and analytical case studies, including asymmetric entangled states , tripartite GHZ-type states and generalized Bell state, which highlighted the operational resilience and limitations of our criteria under realistic noise conditions. These results underscore the necessity of near-perfect device characterization in scalable quantum architectures and provide a semi-device-independent methodology for error quantification.

    Given our findings on the pronounced effects of measurement inaccuracies in high-dimensional and multipartite steering, we note that an emerging research direction, the “imprecision plateau”, offers a promising approach offering the potential to address the issue of imprecise measurement \cite{PhysRevA.111.L020404}. Recent studies have shown that by carefully designing steering inequalities, it is possible to completely avoid degradation in detection capability when the trusted party’s measurements are subject to limited imperfection, provided the inaccuracy remains below a critical threshold  \cite{PhysRevA.111.L020404}. This implies that even when relaxing the unrealistic assumption of perfect measurements, the noise tolerance and detection efficiency thresholds of the experiment can be preserved, and performance in applications such as semi-device-independent randomness generation can be maintained. This phenomenon reveals the potential intrinsic robustness of quantum steering tests against measurement errors. Consequently, future work could aim to develop steering criteria that are both applicable to high-dimensional systems and robust to realistic noise, exhibiting a long “imprecision plateau”. This shifts the focus from analyzing error effects to designing reliable, efficient certification protocols, thereby laying a stronger foundation for practical quantum networks.

    To sum up, our findings not only advance the understanding of steering robustness in imperfect experimental settings but also pave the way for future research into multipartite steering criteria and error-mitigation strategies for high-dimensional quantum resources.
    This work establishes a critical foundation for reliable quantum certification in practical, noise-prone quantum networks.}
\section{Acknowledgments}
This work was supported by the National Science Foundation of China (Grant nos. 12475009, 12075001, and 62471001), Anhui Provincial Key Research and Development Plan (Grant No. 2022b13020004), Anhui Province Science and Technology Innovation Project (Grant No. 202423r06050004),  Anhui Province Natural Science Foundation (Grant no. 202508140141), Anhui Provincial Department of Industry and Information Technology (Grant No. JB24044), and Anhui Provincial University Scientific Research Major Project (Grant No. 2024AH040008).

\bibliography{cite}

\end{document}